
\documentclass[10pt,conference]{IEEEtran}


\usepackage{cite}      

\usepackage{graphicx}  
\addtolength{\textfloatsep}{-4ex}
\addtolength{\abovedisplayskip}{-1ex}
\addtolength{\belowdisplayskip}{-0.5ex}

\begin{document}

\title{Superposition-Coded Concurrent Decode-and-Forward Relaying}



%
\author{\authorblockN{Chao Wang\authorrefmark{1},
Yijia Fan\authorrefmark{2}, Ioannis Krikidis\authorrefmark{1}, John
S. Thompson\authorrefmark{1} and H. Vincent Poor\authorrefmark{2}}
\authorblockA{\authorrefmark{1}Institute for Digital Communications,
University of Edinburgh, Edinburgh, UK}
\authorblockA{\authorrefmark{2}Department of Electrical Engineering, Princeton
University, Princeton, USA}}


\maketitle

\begin{abstract}
In this paper, a superposition-coded concurrent decode-and-forward
(DF) relaying protocol is presented. A specific scenario, where the
inter-relay channel is sufficiently strong, is considered. Assuming
perfect source-relay transmissions, the proposed scheme further
improves the diversity performance of previously proposed
repetition-coded concurrent DF relaying, in which the advantage of
the inter-relay interference is not fully extracted.
\end{abstract}


%
\IEEEpeerreviewmaketitle

\section{Introduction}
The exploitation of cooperation among users has been studied in
recent years as a means for improving diversity performance for
single-antenna wireless systems. Due to the half-duplex limitation,
standard cooperative diversity protocols (e.g. \cite{Laneman2004}
\cite{Laneman2003}) usually require two
time-division-multiple-access (TDMA) time slots to finish each
signal codeword's transmission. Although diversity gain can be
improved over conventional TDMA direct source-destination
transmission, standard cooperation protocols result in lost spectral
efficiency, especially in the high signal-to-noise ratio (SNR)
region.

To overcome the multiplexing limitation of standard protocols, an
advanced successive relaying protocol (independently proposed by
\cite{Rankov2007}, \cite{Yang2007a}, and \cite{Fan2007} in different
contexts) has been considered such that two relays take turns
helping the source to mimic a full-duplex relay. The single-source
single-antenna network studied in \cite{Fan2007} has been extended
to a two-source multiple-antenna (at the destination only) scenario
in \cite{Wang2007} and \cite{Wang2008}, in which the scheme is
termed concurrent decode-and-forward (DF) relaying. For such a
protocol, a two-source two-relay one-destination cooperation network
has been considered. The two sources' standard DF relaying steps are
combined so that the degrees of the freedom of the channel are
efficiently used and the multiplexing loss induced by standard
protocols can be effectively recovered.

The major issue with concurrent DF relaying is that the interference
generated among the two relays significantly affects the system
diversity-multiplexing tradeoff (DMT) performance. In
\cite{Wang2008}, two specific scenarios (i.e. the
\emph{isolated-relay} and \emph{strong-interference} scenarios) are
examined to investigate the impact of the inter-relay interference.
However, for both scenarios, reference \cite{Wang2008} requires the
relays to use repetition coding to retransmit their source messages.
In this paper, we argue that such an assumption is not very
efficient for the strong-interference scenario because the advantage
of the inter-relay interference, which is also useful information,
is not fully extracted. Specifically, for the strong-interference
scenario, instead of requiring each relay to forward its own
source's codeword, we permit it to use superposition coding to
transmit both sources' codewords. In this way, the achievable
diversity gain can be further improved with the sacrifice of only
one extra transmission time slot. When the signal frame length $L$
is large, the multiplexing loss induced by this extra transmission
time is negligible.

The rest of this paper is organized as follows. In Section
\ref{sec:TwoSource}, we briefly review the DMT behavior of the
repetition-coded concurrent DF relaying protocol and present the
superposition-coded concurrent DF relaying protocol for a two-source
network. The system model is generalized to an $M$-source network in
Section \ref{sec:MSource}. Finally, we offer simulation results and
discussions in Section \ref{sec:Simulation}.

\section{Two-Source Concurrent DF Relaying} \label{sec:TwoSource}
We first study a five-node network with two single-antenna sources
$S_1$ and $S_2$, two single-antenna \emph{half-duplex} DF relays
$R_1$ and $R_2$, and one $N$-antenna destination $D$. The
transmitted messages from each source are divided into different
frames, each containing $L$ codewords denoted as $x_i^j$, $i=1,2$,
$j=1,\dots,L$. Two independent Gaussian random codebooks are used by
the two sources and are known by both relays. Each codeword $x_i^j$
is \emph{independently} chosen from the associated Gaussian random
codebook and has unit average power. A slow, flat, block Rayleigh
fading environment is assumed, where the channel remains static for
one coherence interval (two frame periods) and changes independently
in different coherence intervals. Moreover, we assume a uniform
power allocation scheme, i.e. the total transmit power in each
transmission time slot remains the same and each terminal transmits
with equal power.

\subsection{Repetition-Coded Concurrent DF Relaying}

\begin{figure}[t]
\centerline{\includegraphics[height=0.68\linewidth]{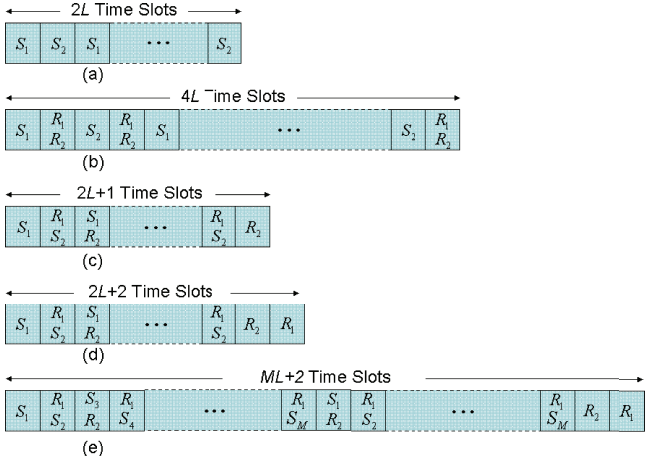}}
\vspace*{-2mm} \caption{Time-division channel allocations for (a)
TDMA direct transmission, (b) space-time-coded standard DF relaying,
(c) repetition-coded concurrent DF relaying, (d) superposition-coded
concurrent DF relaying for the two-source network, and (e)
superposition-coded concurrent DF relaying for the $M$-source
network ($M$ is even). The terminals displayed in each time slot
denote the transmitters in that time slot.} \label{Time_Allocation}
\end{figure}
For such a two-relay scenario, due to the half-duplex operation of
the relays, for each source codeword, the \emph{space-time-coded
standard DF relaying} protocol \cite{Anghel2006}, which is a
practical example of the protocol proposed by \cite{Laneman2003},
requires each source to broadcast the codeword to both relays and
the destination in the first time slot (broadcasting step). The
relays then retransmit the codeword (using a distributed Alamouti
space-time block code) to the destination in the second time slot
(relaying step), as shown in Fig. \ref{Time_Allocation} (b).
Assuming the source messages are correctly decoded by the relays,
the standard protocol can provide significant diversity gain
improvement over TDMA direct source-destination transmission.
However, to finish the transmission of the $2L$ codewords from the
two sources to the destination, $4L$ time slots must be used.
Compared with TDMA direct transmission displayed in Fig.
\ref{Time_Allocation} (a), which needs only $2L$ time slots, the
standard protocol loses spectral efficiency, especially for the high
SNR region.

In order to compensate for the multiplexing gain reduction induced
by the standard protocol, for concurrent DF relaying \cite{Wang2007}
it is assumed that each source is individually assisted by one relay
(i.e. $S_1$ and $S_2$ are supported by $R_1$ and $R_2$ respectively)
and one source's broadcasting step is combined with the other
source's relaying step. As displayed in Fig. \ref{Time_Allocation}
(c), except in the first and the last time slots, one relay and one
source always communicate with the destination simultaneously so
that only $(2L+1)$ time slots are needed to finish the transmission
of the $2L$ codewords.

It is clear that the interference generated among relays can
significantly degrade the system capacity and diversity performance.
However, the two relays may be \emph{isolated} \cite{Yang2007a},
which means the quality of the inter-relay link is much worse than
those of the source-relay links. In this case, the inter-relay
interference is trivial compared with source-relay transmissions and
thus can be ignored. Since the relays are assumed to simply repeat
their source codewords after decoding them, we refer to this
transmission scheme as the \emph{repetition-coded concurrent DF
relaying} throughout the paper.

Define the diversity gain $d$ and multiplexing gain $r$ as those in
\cite{Zheng2003} and assume the system is \emph{symmetric}
\cite{Tes2004}, where the two sources have identical multiplexing
gains $r$. Assuming the source-relay links are sufficiently strong
such that the relays can always perfectly decode their source
messages, the DMT achieved by each source for the repetition-coded
concurrent DF relaying protocol can be expressed by \cite{Wang2008}
\begin{equation}\label{eq:DMT_CDF11}
d(r) = 2N\big(1-\frac{2L+1}{L}r\big).
\end{equation}

The repetition-coded concurrent DF relaying significantly improves
the diversity performance over TDMA direct transmission (with DMT
$d(r)=N(1-2r)$) except for a multiplexing loss
$\frac{1}{2}-\frac{L}{2L+1}=\frac{1}{4L+2}$. Such multiplexing loss
decreases as $L$ increases and can be neglected for large frame
length $L$. However, compared with the space-time-coded standard DF
relaying (with DMT $d(r)=3N(1-4r)$), the repetition-coded concurrent
DF relaying obtains smaller diversity gain when $0 \leq r \leq
\frac{L}{8L-2}$ since each codeword is only forwarded by one relay.

\subsection{Superposition-Coded Concurrent DF relaying}
A \emph{strong-interference scenario} \cite{Cover1991}, where the
channel between the two relays is sufficiently stronger than the
source-relay links, is also studied in \cite{Wang2008}. In this
case, each relay is required to decode the interference signal first
and subtract it from the received signal before decoding the desired
signal. The good quality of the inter-relay channel guarantees that
each relay can correctly decode the interference before decoding its
desired source codeword with very high probability. Therefore, the
interference between relays does not limit the system DMT
performance. However, for such a strong-interference scenario,
reference \cite{Wang2008} still assumes that each relay only
forwards its own source message (the desired signal). In fact, since
the interference signal is the transmitted codeword from the other
source, in this paper, we argue that we can make use of the
interference signal to further improve the system diversity gain.
Specifically, we permit the relays to use superposition coding
\cite{Cover1991} to retransmit both sources' messages, i.e. instead
of retransmitting its desired source codeword, each relay transmits
the sum of the interference codeword and the desired codeword. To
guarantee every codeword to be transmitted via three independent
paths, $(2L+2)$ time slots are used to finish the transmission of
the $2L$ codewords from the two sources. The transmission of the two
frames can be described as follows:

\emph{Time slot 1}: $S_1$ broadcasts $x_1^1$ to both $R_1$ and $D$;
$S_2$ and $R_2$ remain silent.

\emph{Time slot 2}: $R_1$ forwards $x_1^1$ to $D$ and $S_2$
transmits $x_2^1$. $R_2$ listens to $S_2$ while being interfered by
$x_1^1$ from $R_1$. $D$ receives $x_1^1$ from $R_1$ and $x_2^1$ from
$S_2$.

\emph{Time slot 3}: $R_2$ forwards $(x_2^1+x_1^1)$ to $D$. $S_1$
transmits $x_1^2$. $R_1$ listens to $S_1$ while being interfered by
$(x_2^1+x_1^1)$ from $R_2$. $D$ receives $(x_2^1+x_1^1)$ from $R_2$
and $x_1^2$ from $S_1$.

\emph{Time slot 4}: $R_1$ forwards $(x_1^2+x_2^1)$ to $D$. $S_2$
transmits $x_2^2$. $R_2$ listens to $S_2$ while being interfered by
$(x_1^2+x_2^1)$ from $R_1$. $D$ receives $(x_1^2+x_2^1)$ from $R_1$
and $x_2^2$ from $S_2$.

This process repeats until the $(2L)$th time slot.

\emph{Time slot $2L+1$}: $R_2$ retransmits $(x_2^L+x_1^L)$ to $R_1$
and $D$.

\emph{Time slot $2L+2$}: $R_1$ decodes, re-encodes and retransmits
$x_2^L$ to $D$.

Unlike the repetition-coded case, from the $3$rd to the $(2L+1)$th
time slot, the interference signal received by each relay is not
only the other relay's desired source codeword, but also the
codeword transmitted by the relay itself during the previous time
slot. Because each relay has full knowledge of its own transmitted
codeword, it can subtract its previously transmitted codeword from
the received signal before decoding without any difficulty. After
all the $2L$ codewords are received, $D$ performs joint decoding to
recover the source information. We refer to this protocol as the
\emph{superposition-coded concurrent DF relaying} and its
time-division channel allocation and the transmission schedule (from
the $3$rd time slot to the $2L$th time slot) are illustrated in Fig.
\ref{Time_Allocation} (d) and Fig. \ref{Transmission-Schedule}
respectively.

\begin{figure}[t]
\centerline{\includegraphics[height=0.37\linewidth]{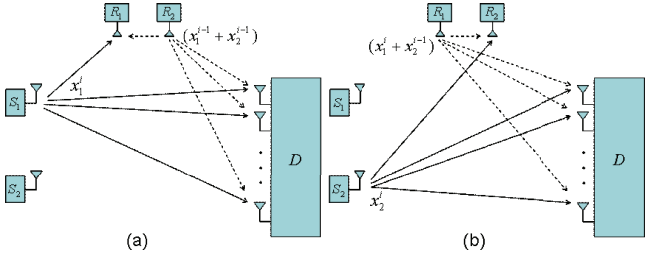}}
\vspace*{-2mm} \caption{Transmission schedule for the
superposition-coded concurrent DF relaying protocol (from time slot
$3$ to time slot $2L$) in (a) time slot $2i-1$, and (b) time slot
$2i$, $i=2,\dots,L$. Solid lines and dashed lines denote the
broadcasting step (time slot $1$) and relaying step (time slot $2$)
of each source's standard DF relaying process respectively.}
\label{Transmission-Schedule}
\end{figure}

Assuming perfect source-relay transmissions, the proposed protocol
mimics a $2L$-user multiple access single-input multiple-output
(SIMO) channel (except that the dimensions of the signals are
expanded in the time domain):
\begin{equation} \label{eq:InOutRelat2}
\textbf{y}=\sqrt{\rho}~\textbf{H}\textbf{x}+\textbf{n},
\end{equation}
in which the equivalent channel matrix is
\begin{equation} \renewcommand{\arraystretch}{1.3} \label{eq:MIMO_InputOutput1}
\textbf{H} = \left[ {\begin{array}{*{20}c}
   \textbf{h}_{S_1} & \textbf{0} & \textbf{0} &  \cdots  & \textbf{0} & \textbf{0}  \\
   \frac{\textbf{h}_{R_1}}{\sqrt{2}} & \frac{\textbf{h}_{S_2}}{\sqrt{2}} & \textbf{0} &  \cdots  & \textbf{0} & \textbf{0}  \\
   \frac{\textbf{h}_{R_2}}{\sqrt{4}} & \frac{\textbf{h}_{R_2}}{\sqrt{4}} & \frac{\textbf{h}_{S_1}}{\sqrt{2}} &  \cdots  & \textbf{0} & \textbf{0}  \\
   \textbf{0} & \frac{\textbf{h}_{R_1}}{\sqrt{4}} & \frac{\textbf{h}_{R_1}}{\sqrt{4}} & \cdots &
   \textbf{0} & \textbf{0} \\
    \vdots  &  \vdots  &  \vdots  &  \ddots  &  \vdots  &  \vdots   \\
   \textbf{0} & \textbf{0} & \textbf{0} &  \cdots  & \frac{\textbf{h}_{R_1}}{\sqrt{4}} & \frac{\textbf{h}_{S_2}}{\sqrt{2}}  \\
   \textbf{0} & \textbf{0} & \textbf{0} &  \cdots  & \frac{\textbf{h}_{R_2}}{\sqrt{2}} & \frac{\textbf{h}_{R_2}}{\sqrt{2}}  \\
   \textbf{0} & \textbf{0} & \textbf{0} &  \cdots  & \textbf{0} & \textbf{h}_{R_1}  \\
\end{array}} \right],
\end{equation}
where $\textbf{h}_{a}$ is the $N \times 1$ channel fading vector
between node $a$ and the destination, $\textbf{0}$ denotes an $N
\times 1$ all zero vector,
$\textbf{y}=[\textbf{y}_1^T~\textbf{y}_2^T~\dots~\textbf{y}_{2L+2}^T]^T$,
$\textbf{y}_i$ is the $N \times 1$ receive signal vector at the
$i$th time slot, $\textbf{x}=[x_1^1~x_2^1~x_1^2~\dots~x_2^L]^T$ is
the $2L \times 1$ transmit signal vector, $\textbf{n}$ is a $(2L+2)N
\times 1$ unit power complex circular additive white Gaussian noise
(AWGN) vector at the destination, and $\rho$ means the average
received SNR. It is worth noting that the scaling factors
$\frac{1}{\sqrt{2}}$ and $\frac{1}{\sqrt{4}}$ come from the uniform
power allocation assumption and have no consequence for the system
infinite-SNR DMT performance. In terms of the achievable DMT, we
have the following theorem.

\newtheorem{theorem}{Theorem}
\begin{theorem} \label{theorem1}
In a symmetric scenario, on assuming that the source codewords are
correctly decoded by the relays, the achievable DMT for each source
of the superposition-coded concurrent DF relaying protocol (i.e. the
system model in (\ref{eq:InOutRelat2})) is given by
\begin{equation}\label{eq:DMT_CDF12}
d(r) = 3N\big(1-\frac{2L+2}{L}r\big).
\end{equation}
\end{theorem}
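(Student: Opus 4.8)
The plan is to read (\ref{eq:InOutRelat2}) as a $2L$‑user Gaussian multiple‑access channel — the $2L$ ``users'' being the independently‑drawn codewords $x_i^j$, each with its own Gaussian codebook — with joint maximum‑likelihood decoding at $D$, and to pin down the SNR exponent of its error probability. Since the $2L+2$ slots carry $L$ codewords of each source, a per‑source multiplexing gain $r$ forces every codeword to be drawn at rate $R_c=\frac{2L+2}{L}\,r\log\rho$; put $\beta:=1-\frac{2L+2}{L}r$, so that $R_c=(1-\beta)\log\rho$. For i.i.d.\ Gaussian inputs over a fixed MIMO channel the error and outage probabilities have the same SNR exponent \cite{Zheng2003}, and the same holds subset by subset for the induced MAC; hence it is enough to compute the exponent of
\[
P_{\mathrm{out}}\ \doteq\ \max_{\emptyset\neq\mathcal S\subseteq\{1,\dots,2L\}}\ \Pr\!\Big[\log\det\!\big(\textbf{I}+\rho\,\textbf{H}_{\mathcal S}\textbf{H}_{\mathcal S}^{\dagger}\big)<|\mathcal S|\,R_c\Big],
\]
where $\textbf{H}_{\mathcal S}$ is the submatrix formed by the columns of $\textbf{H}$ indexed by $\mathcal S$, and the maximum over the finitely many (hence $\rho$‑independent) subsets is a union bound. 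By symmetry it suffices to track one source.

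\emph{Upper bound $d(r)\le 3N\beta$.} Keeping only a single non‑boundary codeword $x_i^j$ in the union gives $P_{\mathrm{out}}\ge\Pr[\log(1+\rho\|\textbf{g}_j\|^2)<R_c]$, where the corresponding column $\textbf{g}_j$ of $\textbf{H}$ is supported on the three disjoint slots in which $x_i^j$ is sent by $S_i$, by $R_1$ and by $R_2$; thus $\|\textbf{g}_j\|^2$ is a strictly positive linear combination of the three independent $\chi^2_{2N}$ quantities $\|\textbf{h}_{S_i}\|^2,\|\textbf{h}_{R_1}\|^2,\|\textbf{h}_{R_2}\|^2$, so $\Pr[\|\textbf{g}_j\|^2<\rho^{-\beta}]\doteq\rho^{-3N\beta}$ and $d(r)\le 3N\beta=3N\big(1-\tfrac{2L+2}{L}r\big)$.

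\emph{Lower bound $d(r)\ge 3N\beta$.} Since $\det(\textbf{I}+\rho A)\ge\rho^{|\mathcal S|}\det A$ for positive semidefinite $A$, outage on $\mathcal S$ forces $\det(\textbf{H}_{\mathcal S}^{\dagger}\textbf{H}_{\mathcal S})<\rho^{-|\mathcal S|\beta}$, so I only need this event to have exponent at least $3N\beta$ for every $\mathcal S$. Expanding $\det(\textbf{H}_{\mathcal S}^{\dagger}\textbf{H}_{\mathcal S})$ as a sum of squared $|\mathcal S|\times|\mathcal S|$ minors (Cauchy--Binet) and retaining those coming from choosing one slot per codeword so that the resulting scalar matrix is triangular, one obtains a lower bound $\det(\textbf{H}_{\mathcal S}^{\dagger}\textbf{H}_{\mathcal S})\ge c_{L,N}\sum_{m}\prod_{t}\|\textbf{h}_t\|^{2n_{t,m}}$ in which every monomial has total degree $|\mathcal S|$ over the (at least three) channel vectors $\textbf{h}_t\in\{\textbf{h}_{S_1},\textbf{h}_{S_2},\textbf{h}_{R_1},\textbf{h}_{R_2}\}$ occurring in $\textbf{H}_{\mathcal S}$. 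Substituting $\|\textbf{h}_t\|^2=\rho^{-u_t}$, the event $\{\det(\textbf{H}_{\mathcal S}^{\dagger}\textbf{H}_{\mathcal S})<\rho^{-|\mathcal S|\beta}\}$ has exponent $N\min\{\sum_t u_t:\ \sum_t n_{t,m}u_t\ge|\mathcal S|\beta\ \text{for all }m,\ u_t\ge0\}$. For a subset that uses only one source (the binding case) the slot‑selections routing every codeword through its source hop, through $R_1$, and through $R_2$ are all valid triangular minors and contribute the monomials $\|\textbf{h}_{S_i}\|^{2|\mathcal S|},\|\textbf{h}_{R_1}\|^{2|\mathcal S|},\|\textbf{h}_{R_2}\|^{2|\mathcal S|}$, which force $u_{S_i},u_{R_1},u_{R_2}\ge\beta$ and hence $\sum_t u_t\ge3\beta$; so the exponent is at least $3N\beta$, with equality attained by a single codeword. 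Subsets that draw on both sources involve all four channel vectors, and the same linear‑program argument (now with an extra variable) shows they are never binding. Together with the upper bound this yields (\ref{eq:DMT_CDF12}), and symmetry gives the same DMT for each source.

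\emph{Where the work is.} The one genuinely delicate step is the structural determinant bound, because $\textbf{H}$ is far from i.i.d.: $\textbf{h}_{R_1}$ and $\textbf{h}_{R_2}$ recur across many columns, every relaying slot superposes two codewords, and the first and last codewords — together with the extra slots $2L+1$ and $2L+2$ — break the periodic pattern. What must be checked is that for every $\mathcal S$ one can still exhibit the required triangular minors: distinct codewords never share a source slot, and two distinct codewords share at most one relaying slot, so a short matching argument over the source and relay slots produces the needed family of monomials, and the boundary codewords only ever enlarge the set of participating channel vectors, hence can only increase the exponent. Making the constant $c_{L,N}$ manifestly independent of $\rho$ is part of the same bookkeeping; everything else is the routine Zheng--Tse exponent calculus.
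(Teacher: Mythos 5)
Your proposal is correct and follows essentially the same route as the paper: reduce the model to the $2L$-user multiple-access outage constraints, observe that the single-codeword constraint (three disjoint hops through $\textbf{h}_{S_i}$, $\textbf{h}_{R_1}$, $\textbf{h}_{R_2}$) gives the binding exponent $3N\big(1-\frac{2L+2}{L}r\big)$, and dispose of all larger subsets by a Grokop--Tse-style determinant/exponent calculation. The paper only sketches this last step (citing the ISI-channel DMT method and an external reference), whereas you carry out the Cauchy--Binet/triangular-minor bookkeeping explicitly; the substance is the same.
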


\begin{proof}
For a symmetric $2L$-user multiple-access SIMO system described in
(\ref{eq:InOutRelat2}), following the capacity calculation in
\cite{Suard1998}, there are $(2^{2L}-1)$ source transmission rate
constraints for a given realization of the channel:
\begin{equation} \label{eq:rate_constraint1}
R \leq \log \left( \det \left( \textbf{I} + \rho \textbf{h}_k
\textbf{h}_k^H \right)\right),
\end{equation}
\begin{equation}\label{eq:rate_constraint2}
2R \leq \log \left( \det \left( \textbf{I} + \rho \textbf{h}_{k_1}
\textbf{h}_{k_1}^H + \rho \textbf{h}_{k_2} \textbf{h}_{k_2}^H
\right)\right),
\end{equation}
\begin{displaymath}
\vdots
\end{displaymath}
and
\begin{equation} \label{eq:rate_constraint3}
2LR \leq \log \left( \det \left( \textbf{I} + \rho \textbf{H}
\textbf{H}^H \right)\right),
\end{equation}
where $\textbf{h}_k$ denotes the $k$th column of $\textbf{H}$. The
system diversity gain is thus the smallest diversity gain calculated
by all the constraints from (\ref{eq:rate_constraint1}) to
(\ref{eq:rate_constraint3}).

Consider an $(m+2)N \times m$ multiple-input multiple-output (MIMO)
channel (each codeword $s_i$ has multiplexing gain
$r'=\frac{2L+2}{L}r$ so that the average transmission rate
$\bar{R}=\frac{L}{2L+2}r'\log\rho=r\log\rho$)
\begin{equation} \renewcommand{\arraystretch}{1}
\label{eq:MIMO_Matrix}
\left[ {\begin{array}{*{20}c} \textbf{r}_1 \\ \textbf{r}_2 \\ \textbf{r}_3 \\
\vdots
\\ \textbf{r}_{m+1}
\\ \textbf{r}_{m+2} \\
\end{array}} \right] = \sqrt{\rho}~\left[ {\begin{array}{*{20}c}
   \textbf{g}_1 & \textbf{0} & \textbf{0} & \cdots & \textbf{0}\\
   \textbf{g}_2 & \textbf{g}_3 & \textbf{0} & \cdots & \textbf{0} \\
   \textbf{g}_4 & \textbf{g}_4 & \textbf{g}_1 & \cdots & \textbf{0} \\
   \textbf{0} & \textbf{g}_2 & \textbf{g}_2 & \cdots & \textbf{0} \\
   \vdots & \vdots & \vdots & \ddots & \vdots \\
   \textbf{0} & \textbf{0} & \textbf{0} & \cdots & \textbf{g}_{k_1} \\
   \textbf{0} & \textbf{0} & \textbf{0} & \cdots & \textbf{g}_{k_2} \\
   \textbf{0} & \textbf{0} & \textbf{0} & \cdots & \textbf{g}_{k_3} \\
\end{array}} \right] \left[ {\begin{array}{*{20}c} s_1 \\ s_2 \\ s_3 \\ \vdots \\
s_m
\\ \end{array}} \right] + \textbf{n},
\end{equation}
where $k_1 = 1$, $k_2=2$, and $k_3=4$ when $m$ is odd and $k_1=3$,
$k_2=4$, and $k_3=2$ when $m$ is even. For infinite SNR, the task of
finding the smallest diversity gain obtained by each constraint from
(\ref{eq:rate_constraint1}) to (\ref{eq:rate_constraint3}) is the
same as finding the smallest diversity gain achieved by the system
(\ref{eq:MIMO_Matrix}) for every $1 \leq m \leq 2L$ \cite{Wang2007}.

When $m=1$, the system model in (\ref{eq:MIMO_Matrix}) is a $1
\times 3N$ SIMO system. The achievable DMT is clearly $d(r) = 3N
\left(1-r' \right) = 3N \left(1-\frac{2L+2}{L}r \right)$. When $m >
1$, applying a method similar to that used for the DMT calculation
for the ISI channels in \cite{Grokop2005}, it is not difficult to
show that $d(r) = 4N \left(1-r' \right)$. Because the overall system
diversity gain is dominated by the smallest one for all $m$, it thus
is (i.e. the case where $m=1$) the same as the right hand side of
(\ref{eq:DMT_CDF12}). Due to limited space, here we omit the
detailed proof, which can be found in \cite{Wang2007b}.
\end{proof}

\emph{Theorem \ref{theorem1}} indicates that superposition-coded
concurrent DF relaying obtains the maximal diversity gain $3N$ and
maximal multiplexing gain $\frac{L}{2L+2}$. This means that the
diversity performance of the repetition-coded concurrent DF relaying
is further improved by making use of the inter-relay interference.
Therefore, unlike the repetition-coded case, where the achievable
diversity gain is larger than that of the space-time-coded standard
protocol only in the high $r$ region, superposition-coded concurrent
DF relaying strictly outperforms the standard protocol within the
range of all possible multiplexing gains (except for the worst case
$L=1$, where the two protocols have identical performance). Although
there exists a slight difference for the maximal achievable
multiplexing gain
$\frac{L}{2L+1}-\frac{L}{2L+2}=\frac{L}{(2L+1)(2L+2)}$ between the
repetition-coded and superposition-coded concurrent DF relaying
protocols (due to the extra transmission time slot), when $L$ is
large this difference is negligible and the maximal multiplexing
gains for both protocols approach $\frac{1}{2}$. The multiplexing
loss induced by the standard protocol is fully compensated in both
protocols. Fig. \ref{DMTComparison} displays an example ($N = 2$,
$L=15$) of the DMT comparison.
\begin{figure}[t]
\centerline{\includegraphics[height=0.6\linewidth]{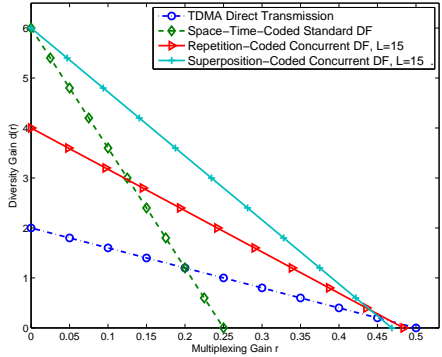}}
\vspace*{-5mm} \caption{DMT performance for different protocols with
$N = 2$.} \label{DMTComparison}
\end{figure}

Throughout this paper, we assume that the source-relay transmissions
are perfect so that the system diversity gain is not limited by the
quality of source-relay links. Making use of the inter-relay
interference can thus further improve the diversity performance over
the simple repetition-coded protocol. One may argue that, in
practical systems, such good source-relay links may not be able to
be guaranteed and the system DMT performance may be affected by any
weak source-relay link. In fact, in a general cooperation network,
there usually exist multiple terminals which can act as potential
relays. If the number of potential relays is very large, the
probability of selecting at least one relay pair such that one relay
can correctly decode one source and the other relay can correctly
decode the other source is sufficiently high. In this case, the
system DMT performance behaves the same as the case in which the
transmissions between the sources and their relays are always
successful. Therefore, our assumption is actually not uncommon in
reality. The impact of using relay selection schemes in
multiple-relay scenarios on the system DMT performance is currently
under investigation.

\section{$M$-Source Concurrent DF Relaying} \label{sec:MSource}
The two-source system model can also be extended to a large network
with $M$ single-antenna sources, two single-antenna relays and one
$N$-antenna destination, as has been done for the repetition-coded
case in \cite{Wang2008}. The basic idea is that the $M$ sources
communicate with the common destination using TDMA and the two
relays take turns helping each source until the transmission of the
$L$ codewords from each source is finished. Therefore, $ML+2$ time
slots are used to complete the transmission of the $ML$ codewords
from the $M$ sources. Assuming perfect decoding at the relays, the
time-division channel allocation is illustrated in Fig.
\ref{Time_Allocation} (e) (where $M$ is even) and in terms of the
achievable DMT, we have the following corollary to \emph{Theorem 1}.

\newtheorem{corolary}{Corollary}
\begin{corolary}
In a symmetric scenario, on assuming perfect source-relay
transmissions, the achievable DMT for each source of the
superposition-coded $M$-source concurrent DF relaying protocol is
given by
\begin{equation}\label{eq:DMT_CDF5}
d(r) = 3N\big(1-\frac{ML+2}{L}r\big).
\end{equation}
\end{corolary}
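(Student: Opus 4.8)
The plan is to reduce the $M$-source problem to the same family of MIMO sub-channel DMT computations that proves Theorem~\ref{theorem1}, exploiting the fact that the $M$-source schedule inherits the local structure of the two-source one. First I would write the end-to-end relation, under perfect source--relay decoding, as an $ML$-user multiple-access SIMO channel $\textbf{y}=\sqrt{\rho}\,\textbf{H}\textbf{x}+\textbf{n}$, where $\textbf{x}$ stacks the $ML$ source codewords in round-robin order $x_1^1,x_2^1,\dots,x_M^1,x_1^2,\dots,x_M^L$ and $\textbf{H}$ is $(ML+2)N\times ML$. Because the two relays still swap their forwarding and overhearing roles from one slot to the next, each relay superposes the codeword it has just decoded onto the codeword it transmitted in the previous slot, and the two cleanup slots $ML+1$ and $ML+2$ guarantee that every one of the $ML$ codewords reaches $D$ through exactly three independent paths, $\textbf{H}$ has the same banded (staircase) form as (\ref{eq:MIMO_InputOutput1}): each column contains at most three nonzero $N\times1$ blocks drawn from the mutually independent vectors $\textbf{h}_{S_1},\dots,\textbf{h}_{S_M},\textbf{h}_{R_1},\textbf{h}_{R_2}$, and the power-normalization scalars are immaterial at infinite SNR.

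Next, following \cite{Suard1998}, I would impose the $2^{ML}-1$ symmetric-rate constraints $|\mathcal{S}|R\le\log\det\!\big(\textbf{I}+\rho\sum_{k\in\mathcal{S}}\textbf{h}_k\textbf{h}_k^H\big)$, one for each nonempty set $\mathcal{S}$ of columns of $\textbf{H}$, the achievable diversity being the smallest diversity exponent among them. As in \cite{Wang2007} and in the argument behind Theorem~\ref{theorem1}, at infinite SNR the worst case within the class of size-$m$ constraints is realized by a window of $m$ consecutive columns, which forms an $(m+2)N\times m$ staircase MIMO sub-channel of the same shape as (\ref{eq:MIMO_Matrix}); it therefore suffices to minimize the sub-channel DMT over $1\le m\le ML$, where each scalar stream carries multiplexing gain $r'=\frac{ML+2}{L}r$ (so that the per-source average rate is $\bar{R}=\frac{L}{ML+2}r'\log\rho=r\log\rho$). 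For $m=1$ the sub-channel is a $1\times 3N$ SIMO link --- one codeword seen through the three paths $S_i\to D$, $R_1\to D$ and $R_2\to D$ --- with DMT $3N(1-r')$. For $m>1$, the ISI-channel technique of \cite{Grokop2005} used for Theorem~\ref{theorem1} gives a DMT of at least $4N(1-r')$ (equal to it, as in Theorem~\ref{theorem1}, when only a window involving four distinct channel vectors is active), which for $r'\le1$ strictly exceeds $3N(1-r')$. Hence the $m=1$ constraint is the bottleneck and $d(r)=3N\big(1-\frac{ML+2}{L}r\big)$.

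The hard part, and essentially the only step that is not immediate once $\textbf{H}$ has been written down, is justifying this reduction: that among the exponentially many subset constraints the worst-case diversity is captured by the staircase sub-channels above, and that $m=1$ dominates. Relative to the two-source case the additional work is purely combinatorial --- one must check that round-robin scheduling of $M$ (rather than two) sources still produces sliding-window sub-matrices with the structure the ISI-channel DMT argument requires, and that introducing the extra source vectors only enlarges the per-stream spatial diversity, so that no size-$m$ constraint with $m>1$ can fall below $3N(1-r')$. Granting that, the diversity exponents are computed exactly as for Theorem~\ref{theorem1}, and I would, as there, defer the detailed verification to \cite{Wang2007b} and keep the proof at the level of a sketch.
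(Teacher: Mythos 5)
Your proposal is correct and matches the paper's intent: the paper offers no separate proof of the corollary, presenting it as following from the identical argument used for \emph{Theorem 1} (rate constraints from \cite{Suard1998}, reduction to $(m+2)N\times m$ staircase sub-channels with $r'=\frac{ML+2}{L}r$, and the $m=1$ SIMO constraint $3N(1-r')$ dominating the $4N(1-r')$ constraints for $m>1$), which is exactly the argument you carry out.
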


\emph{Corollary 1} implies that, compared with repetition-coded
concurrent DF relaying for the $M$-source network, which needs
$(ML+1)$ time slots and obtains DMT $d(r) =
2N\left(1-\frac{ML+1}{L}r\right)$, the superposition-coded protocol
improves the maximal achievable diversity gain from $2N$ to $3N$,
but reduces the maximal achievable multiplexing gain from
$\frac{L}{ML+1}$ to $\frac{L}{ML+2}$. However, if $ML$ is large, the
maximal multiplexing gain difference is negligible and both gains
approach $\frac{1}{M}$ (the maximal multiplexing gain for TDMA
direct transmission) so that the multiplexing loss is fully
recovered and the requirement of $L$ being large is relaxed.
Clearly, when $M=1$, the system model is the single-source scenario
studied in the content of the successive relaying protocol proposed
in \cite{Fan2007}. This means that superposition coding can also be
used in successive relaying to further increase diversity
performance and thus (\ref{eq:DMT_CDF5}) offers a generalized
result.

\section{Simulation Results and Discussions} \label{sec:Simulation}
In this section, we compare our two-source superposition-coded
concurrent DF relaying scheme with other schemes discussed in
Section \ref{sec:TwoSource} in terms of error probability through
Monte-Carlo simulations. The source messages are assumed to be
always correctly decoded by the relays. In our simulations, we
consider the signal frame lengths $L=1$ and $L=2$ for the
repetition-coded and superposition-coded concurrent DF relaying
protocols, respectively. For this choice, both schemes obtain the
maximal multiplexing gain $\frac{1}{3}$. These two cases are
actually the worst cases for both schemes. (Recall that when $L=1$,
the superposition-coded concurrent DF relaying has the same DMT
performance as the space-time-coded standard protocol and we
therefore do not consider this case.) And following the analysis in
Section \ref{sec:TwoSource}, when $L>1$ ($L>2$), the performance of
the repetition-coded (superposition-coded) concurrent DF relaying
would be even better than those shown in the following simulations.

Fig. \ref{OutComparison} displays the outage probabilities
comparison for different schemes when multiplexing gain
$r=\frac{1}{6}$ (i.e. the transmission rates are not fixed and scale
with SNR). Following the analysis in Section \ref{sec:TwoSource}, it
can be seen that the DMT curves for the standard protocol and the
repetition-coded concurrent DF relaying intersect, which means the
two protocols have the same diversity gains. Clearly, this diversity
gain is further improved by the use of the superposition coding in
the relays. Such a diversity performance can be seen by comparing
the slopes of the high-SNR outage probability curves for different
schemes.

\begin{figure}[t]
\centerline{\includegraphics[height=0.6\linewidth]{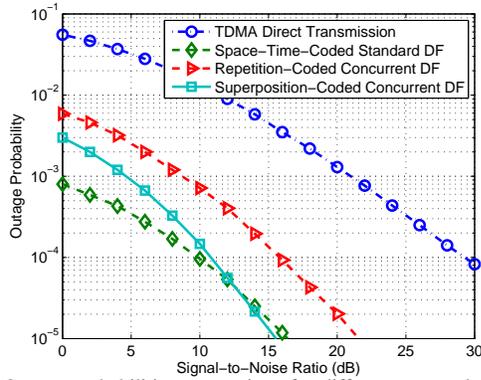}}
\vspace*{-5mm} \caption{Outage probabilities comparison for
different protocols with $N = 2$ and multiplexing gain
$r=\frac{1}{6}$.} \label{OutComparison}
\end{figure}

We also study the error performance for uncoded symbols for
different schemes. For a fair comparison, we consider $4$-QAM,
$8$-QAM and $16$-QAM modulation for TDMA direct transmission,
concurrent DF relaying and the standard protocol respectively so
that all schemes have identical average transmission rates at two
bits per channel use (BPCU). For decoding at the destination, a
maximal ratio combining (MRC) receiver is used for TDMA direct
transmission and the standard protocol, and a maximum likelihood
sequence detector (MLSD) receiver is used for the concurrent DF
relaying protocols. Moreover, we consider two different ways to use
superposition coding in the relays. The first one (denoted as mode 1
in Fig. \ref{PsComparison}) is similar to superposition modulation
\cite{Larsson2005} and we require each relay to retransmit the
direct sum of its desired signal and the interference. The second
one is similar to code superposition \cite{Xiao2006} (denoted as
mode 2). In this case, each codeword transmitted by the relays
represents the XORed version of the two signals.

From Fig. \ref{PsComparison}, it can be seen TDMA direct
transmission has the worst high-SNR performance. Although
repetition-coded concurrent DF relaying improves the error
performance due to the signal protection by the relays, it performs
worse than space-time-coded standard DF relaying since each codeword
is only forwarded by one relay. Clearly, superposition-coded
concurrent DF relaying has the same diversity order as the standard
protocol. Furthermore, mode 2 superposition coding outperforms mode
1 by nearly $1.7$ dB, which confirms the advantage of code
superposition analyzed in \cite{Xiao2006}. This observation suggests
interesting future work in applying network coding techniques in our
approach.

\begin{figure}[t]
\centerline{\includegraphics[height=0.6\linewidth]{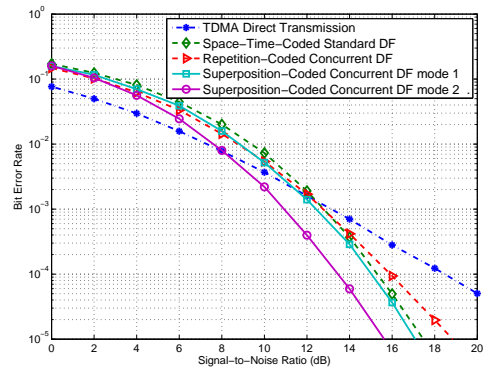}}
\vspace*{-5mm} \caption{Bit error rate comparison for different
protocols with $N = 2$.} \label{PsComparison}
\end{figure}


\section*{Acknowledgment}
C. Wang's, I. Krikidis' and J. S. Thompson's work reported in this
paper has formed part of the Delivery Efficiency Core Research
Programme of the Virtual Centre of Excellence in Mobile \& Personal
Communications, Mobile VCE, www.mobilevce.com. This research has
been funded by EPSRC and by the Industrial Companies who are Members
of Mobile VCE. Fully detailed technical reports on this research are
available to Industrial Members of Mobile VCE. Y. Fan's and H. V.
Poor's work was supported in part by the U.S. National Science
Foundation under Grants ANI-03-38807 and CNS-06-25637. The authors
acknowledge the support of the Scottish Funding Council for the
Joint Research Institute with the Heriot-Watt University which is a
part of the Edinburgh Research Partnership.

\bibliographystyle{IEEEtran}
\bibliography{IEEEabrv,MyReferences}

\end{document}